\newcommand{\executeiffilenewer}[3]{%
\ifnum\pdfstrcmp{\pdffilemoddate{#1}}%
{\pdffilemoddate{#2}}>0%
{\immediate\write18{#3}}\fi%
}
\newcommand{%
\executeiffilenewer{.svg}{.pdf}%
{inkscape -z -D --file=.svg %
--export-pdf=.pdf --export-latex}%
\input{.pdf_tex}%
}[1]{%
\executeiffilenewer{#1.svg}{#1.pdf}%
{inkscape -z -D --file=#1.svg %
--export-pdf=#1.pdf --export-latex}%
\input{#1.pdf_tex}%
}
\newcounter{algocount}
\newenvironment{algorithm}[1][]{\refstepcounter{algocount}\begin{trivlist}\item \textbf{Algorithm \thealgocount.}#1\\[-0.2cm]\rule{\columnwidth}{1pt}}{\\[-0.2cm]\rule{\columnwidth}{1pt}\end{trivlist}}
\theoremstyle{plain}
\newtheorem{proposition}{Proposition}
\newtheorem{lemma}{Lemma}
\theoremstyle{definition}
\theoremstyle{plain}
\theoremstyle{definition}
\theoremstyle{remark}
\newcommand{\de}{\,\mathrm{d}}
\newcommand{\vecone}{\boldsymbol{1}}
\newcommand{\vecmu}{\boldsymbol{\mu}}
\newcommand{\vecd}{\boldsymbol{d}}
\newcommand{\vecp}{\boldsymbol{p}}
\newcommand{\vect}{\boldsymbol{t}}
\newcommand{\vecv}{\boldsymbol{v}}
\newcommand{\vecw}{\boldsymbol{w}}
\DeclareMathOperator{\kl}{D}
\DeclareMathOperator{\entop}{\mathrm{H}}
\DeclareMathOperator*{\minimize}{minimize}
\DeclareMathOperator*{\st}{subject\,to}
\title{Writing on the Facade of RWTH ICT Cubes:\\
Cost Constrained Geometric Huffman Coding}
\author{\IEEEauthorblockN{Georg B\"ocherer\IEEEauthorrefmark{1}, Fabian Altenbach\IEEEauthorrefmark{1}, Martina Malsbender\IEEEauthorrefmark{2}, and Rudolf Mathar\IEEEauthorrefmark{1}}
\IEEEauthorblockA{\IEEEauthorrefmark{1}Institute for Theoretical Information
Technology\\
RWTH Aachen University, 52056 Aachen, Germany
\\ Email: \texttt{\{boecherer,altenbach,mathar\}@ti.rwth-aachen.de}}
\IEEEauthorblockA{\IEEEauthorrefmark{2}kadawittfeldarchitektur gmbh
\\
52064 Aachen, Germany
\\ Email: \texttt{ICTcubes@kadawittfeldarchitektur.de}}
\thanks{This work has been supported by the UMIC Research Center, RWTH
Aachen University.}
}
\begin{document}
\maketitle

\begin{abstract}
In this work, a coding technique called \emph{cost constrained Geometric Huffman coding} (\textsc{ccGhc}) is developed. \textsc{ccGhc} minimizes the Kullback-Leibler distance between a dyadic probability mass function (pmf) and a target pmf subject to an affine inequality constraint. An analytical proof is given that when \textsc{ccGhc} is applied to blocks of symbols, the optimum is asymptotically achieved when the blocklength goes to infinity. The derivation of \textsc{ccGhc} is motivated by the problem of encoding a text to a sequence of slats subject to architectural design criteria. For the considered architectural problem, for a blocklength of $3$, the codes found by \textsc{ccGhc} match the design criteria. For communications channels with average cost constraints, \textsc{ccGhc} can be used to efficiently find prefix-free modulation codes that are provably capacity achieving.
\end{abstract}

\section{Introduction}

In the near future, parts of the electrical engineering faculty of RWTH Aachen University will move into new buildings called \emph{Information and Communication Technology} (ICT) \emph{cubes}. To protect the cubes against heating up in sun light, the idea is to shadow the facades by placing rows of slats in front of them. The slats itself come in three forms, left, right, and middle. All slats types have a height of $1.70$m. The widths are given by $0.18$m, $0.18$m, and $0.31$m, respectively. Each $0.625$m a slat is placed. See also Fig.~\ref{fig:visualization} for a visualization of the cubes. To cover all eight facades of the two cubes, a total number of $4264$ slats is required. The actual choice of slats is subject to the following design criteria.
\begin{enumerate}
\item[C1.] For aesthetic reasons, the sequence of slats should appear random.
\item[C2.] To ensure enough cooling, around 33\% of the facade area should be covered by the slats.
\item[C3.] Since shadow turns the rooms dark, the total shadowing should not exceed 33\%.
\end{enumerate}
Observing that many different sequences of slats fulfill the above constraints, Mr. Mathar came up with the idea to encode a text to the sequence of slats, when read row by row from left to right. Thus, the challenge is to encode a text to a sequence of slats subject to the design criteria C1., C2, and C3.
\begin{figure}
\includegraphics[width=\columnwidth]{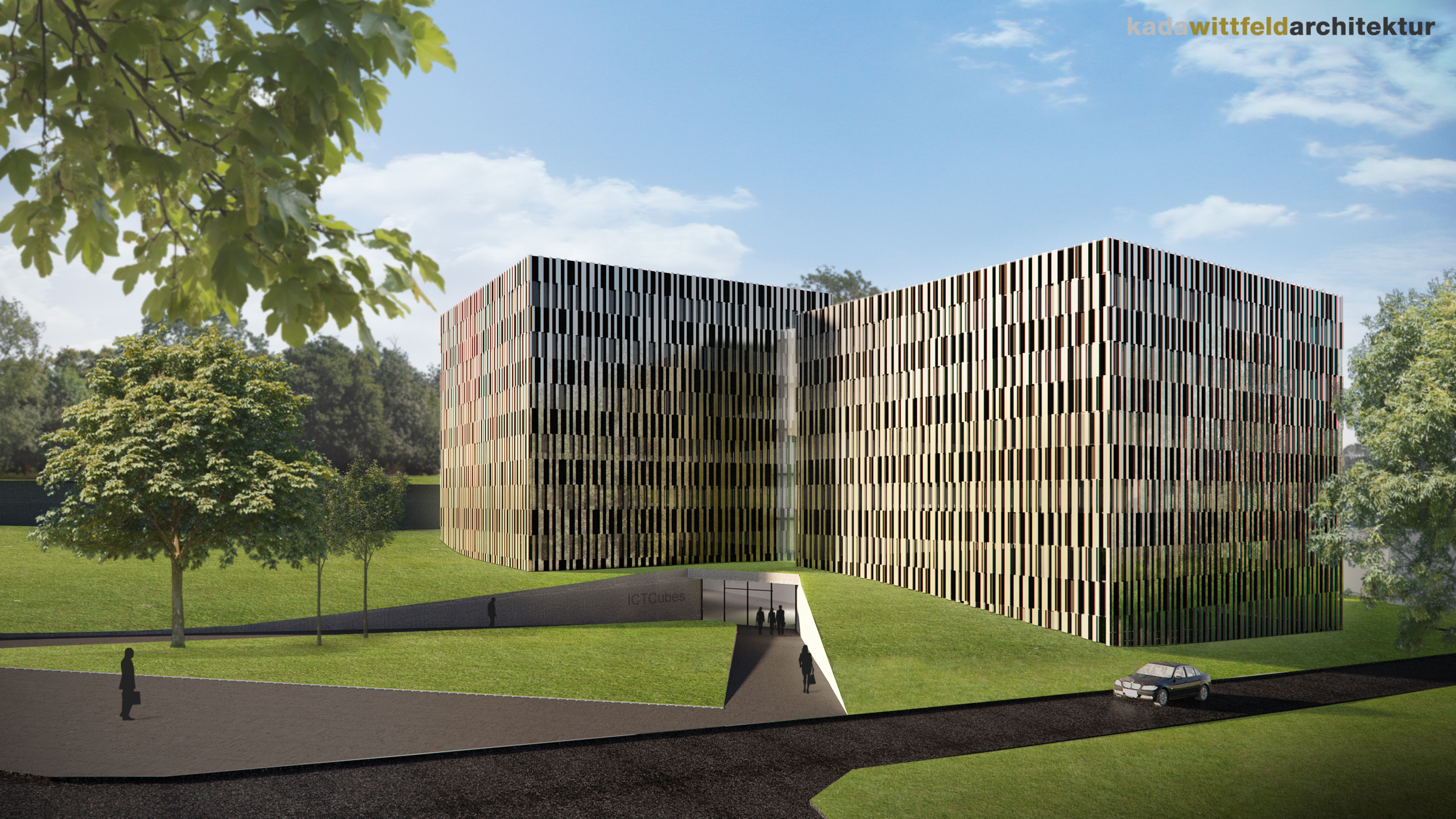}
\caption{Visualization of the ICT cubes.}
\label{fig:visualization}
\end{figure}

In the remainder of this work, we develop a coding scheme that solves this problem. The key part of our scheme is a new algorithm that we call \emph{cost constrained geometric Huffman coding} (\textsc{ccGhc}). This algorithm minimizes the Kullback-Leibler distance between a dyadic probability mass function (pmf) and a target pmf subject to an affine inequality constraint. Interestingly, in the context of channel matching \cite{Bocherer2011}, \textsc{ccGhc} can also be used to directly find capacity-achieving modulation codes for communication channels with average power constraint. This improves upon the broad search approach that we presented in \cite{Bocherer2011a}.

\section{Approach}
\subsection{Problem Modelling}
As stated in the introduction, there are three types of slats, i.e., left, right, and middle slats. We index them in this order. To turn the design problem into a tractable problem, we use a probabilistic model. Assume each slat is drawn independent and identically distributed (iid) from the set $\{1,2,3\}$ according to a pmf $\vecp=(p_1,p_2,p_3)^T$. According to criterion C1., we would ideally choose uniformly among the three types of slats. Thus, we would like the pmf $\vecp$ to be close to the uniform target pmf $\vect=(1/3,\,1/3,\,1/3)^T$. As a distance measure, we use the Kullback-Leibler (KL) distance, which is defined as
\begin{align}
\kl(\vecp\Vert\vect)=\sum_i p_i\log\frac{p_i}{t_i}.
\end{align}
Thus, criterion C1. can be cast into the objective to minimize $\kl(\vecp\Vert\vect)$. The width of the slats in meters is
\begin{align}
\vecw=(w_1,w_2,w_3)^T=(0.18,0.18,0.31)^T\quad[\mathrm{m}].
\end{align}
Every $0.625$m, a slat is placed. Thus, by criterion C2., each slat has to cover in the average a 
breadth of 
\begin{align}
S=33\%\cdot 0.625 = 0.2063.
\end{align}
Note that $\vecw^T\vect/0.625\approx 36\%$, i.e., when using the uniform distribution, the shadowing is too strong and criterion C3. is violated. Thus, criterion C2. and C3. can be cast into the affine inequality constraint $\vecw^T\vecp\leq S$. 

Pmfs of the slats can be generated as follows. We do source-channel separation with a binary interface, i.e., we first compress the text to a binary sequence, and we then design a code that maps the binary sequence to a sequence of slats with the objective to match the design criteria. The text compression part is a well-studied topic. For now, we therefore assume perfect compression, i.e., after text compression, we have a stream of iid equiprobable bits. By parsing the binary stream by a full prefix-free code, we can generate dyadic pmfs $\vecd$ \cite{Bocherer2011}, i.e., pmfs where each entry $d$ is of the form
\begin{align}
d=2^{-\ell},\qquad \ell\in\mathbb{N}. 
\end{align}
Thus, our objective is to approximate the target pmf $\vect$ by a dyadic pmf $\vecd$ while guaranteeing in the average a shadowing of at most $S$. Within the probabilistic model, the criteria C1.-C3. can now be cast into the following optimization problem.
\begin{align}
\begin{split}
\minimize_{\vecd}\quad&\kl(\vecd\Vert\vect)\\
\st\quad&\vecw^T\vecd\leq S\\
&\vecd \text{ is a dyadic pmf}.
\end{split}
\label{prob:dyadic}
\end{align}
\subsection{Cost Constrained Geometric Huffman Coding}
Without the restriction of pmfs to be dyadic, problem~\eqref{prob:dyadic} is a convex optimization problem and can be solved efficiently. However, the restriction to dyadic pmfs makes the set of argument $\vecp$ discrete and the problem is not convex anymore. To the best of our knowledge, there is no efficient algorithm known that directly solves the problem. We therefore write the problem as a trade-off problem by adding a scaled version $\lambda\vecw^T\vecd$ of the shadowing to the objective function. This can be written as
\begin{align}
\kl(\vecd\Vert\vect)+\lambda\vecw^T\vecd&=\sum_i d_i\log\frac{d_i}{t_i}+\lambda\vecw^T\vecd\\
&=\kl(\vecd\Vert\vect\circ2^{-\lambda\vecw}).
\end{align}
The solution can efficiently be found by \emph{geometric Huffman coding} (\textsc{Ghc}), i.e., $\vecd=\textsc{Ghc}(\vect\circ2^{-\lambda\vecw})$. See \cite{Bocherer2011} for the definition of \textsc{Ghc} and \cite{website:ghc} for an implementation in Matlab. The shadowing constraint can be guaranteed by iteratively adapting $\lambda$: if for the resulting $\vecd$, $\vecw^T\vecd>S$, increase $\lambda$ and repeat, if $\vecw^T\vecd<S$, decrease $\lambda$ and repeat. Thus, the solution can be found by bisection. In summary, we have the following algorithm, which we call \emph{cost constrained geometric Huffman coding} (\textsc{ccGhc}).
\begin{algorithm}[(\textsc{ccGhc})]\ 
\\
$\ell<\lambda^*<u$\\
\textbf{repeat}\\
\indent 1. $\lambda=\frac{\ell+u}{2}$\\
\indent 2. $\vecd=\textsc{Ghc}(\vect\circ 2^{-\lambda\vecw})$\\
\indent 3. \textbf{if} $\vecw^T\vecd\leq S$, $u\leftarrow \lambda$; \textbf{else} $\ell\leftarrow \lambda$\\
\textbf{until} $u-\ell<\epsilon$\\
$\lambda^*=u$\\
$\vecd=\textsc{Ghc}(\vect\circ 2^{-\lambda^*\vecw})$
\end{algorithm}
\subsection{Asymptotic Achievability}
To evaluate the quality of the dyadic pmf found by \textsc{ccGhc}, we compare to what can be achieved when dropping the restriction to dyadic pmfs, i.e., when allowing the argument $\vecp$ in problem \eqref{prob:dyadic} to be any pmf from the probability simplex. Denote the optimal pmf from the probability simplex by $\vecp^*$. Since there is only a finite number of dyadic pmfs of a given length, the performance of the dyadic pmf $\vecd$ found by \textsc{ccGhc} may be too bad compared to what is achieved by the optimal pmf $\vecp^*$. This problem can be solved by generating dyadic pmfs of blocks of symbols. Consider the target pmf $\vect^k$ of $k$ consecutive symbols. The corresponding shadowing is given by the Kronecker sum $\vecv_k=\vecw^{\oplus k}$ of $k$ copies of $\vecw$. For an increasing blocklength, we have the following result. 
\begin{proposition}\label{prop:achievability}
Define $\vecd_k=\textsc{ccGhc}(\vect^k,\vecv_k,kS)$. Then, 
\begin{align}
\frac{\kl(\vecd_k\Vert\vect^{k})}{k}\to &\kl(\vecp^*\Vert\vect)\\
\text{and }\frac{\vecv_k^T\vecd_k}{k}\to & S,\;\frac{\vecv_k^T\vecd_k}{k}\leq S
\end{align}
i.e., the distance from the target pmf per symbol converges to the optimal value and the average shadowing per symbol converges to the target shadowing $S$, while the shadowing constraint is always fulfilled.
\end{proposition}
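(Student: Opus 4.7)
The plan is to prove matching converse and achievability bounds; the feasibility assertion $\vecv_k^T\vecd_k\le kS$ is built into step~3 of the ccGhc algorithm. For the converse I would take any pmf $\vecp$ on $\{1,2,3\}^k$ with $\vecv_k^T\vecp\le kS$, write $\vecp_j$ for its $j$-th marginal and $\bar\vecp=\tfrac{1}{k}\sum_{j=1}^{k}\vecp_j$ for their average. Subadditivity of entropy combined with the product form of $\vect^k$ gives $\kl(\vecp\Vert\vect^k)\ge\sum_j\kl(\vecp_j\Vert\vect)$, and convexity of $\kl(\cdot\Vert\vect)$ in its first argument then yields $\kl(\vecp\Vert\vect^k)/k\ge\kl(\bar\vecp\Vert\vect)$. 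Since $\vecv_k^T\vecp=k\vecw^T\bar\vecp\le kS$, the average marginal $\bar\vecp$ is feasible for the continuous relaxation of problem~\eqref{prob:dyadic}, hence $\kl(\bar\vecp\Vert\vect)\ge\kl(\vecp^*\Vert\vect)$. Specialising to $\vecp=\vecd_k$ gives the converse.

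For achievability I would re-use the penalised-objective identity behind the derivation of ccGhc. With $\vecp_\lambda=\vect\circ 2^{-\lambda\vecw}/Z_\lambda$ and $Z_\lambda=\sum_i t_i 2^{-\lambda w_i}$,
\begin{equation*}
\kl(\vecd\Vert\vect^k)+\lambda\vecv_k^T\vecd=\kl(\vecd\Vert\vecp_\lambda^k)-k\log Z_\lambda,
\end{equation*}
so the Ghc step of ccGhc at parameter $\lambda$ is exactly a dyadic approximation of the product measure $\vecp_\lambda^k$. The asymptotic optimality of \textsc{Ghc} on product distributions established in~\cite{Bocherer2011} provides $\kl(\textsc{Ghc}(\vecp_\lambda^k)\Vert\vecp_\lambda^k)/k\to 0$, so at the bisected value $\lambda_k$ the output $\vecd_k$ satisfies
\begin{equation*}
\tfrac{1}{k}\kl(\vecd_k\Vert\vect^k)=-\log Z_{\lambda_k}-\lambda_k\tfrac{1}{k}\vecv_k^T\vecd_k+o(1).
\end{equation*}
Combined with the duality identity $\kl(\vecp^*\Vert\vect)=-\log Z_{\lambda^*}-\lambda^*S$ for the KKT multiplier $\lambda^*\ge 0$ of the continuous problem, matching the claimed limit then reduces to showing $\lambda_k\to\lambda^*$ and $\vecv_k^T\vecd_k/k\to S$.

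I would close these two limits by sandwiching $\lambda_k$. For any $\mu>\lambda^*$ the tilted mean satisfies $\vecw^T\vecp_\mu<S$ strictly, so provided the per-symbol shadowing of $\textsc{Ghc}(\vecp_\mu^k)$ converges to $\vecw^T\vecp_\mu$, the bisection returns $\lambda_k\le\mu$ for all large $k$; conversely, if $\lambda_k\le\lambda^*-\epsilon$ along a subsequence, the same convergence at $\mu=\lambda_k$ pushes $\vecv_k^T\vecd_k/k$ strictly above $S$, contradicting feasibility. Sending $\mu\downarrow\lambda^*$ and $\epsilon\downarrow 0$ then gives $\lambda_k\to\lambda^*$, and plugging back into the displayed identity recovers both remaining limits. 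The main obstacle is precisely the parenthetical used twice: for a fixed pmf $\vecp$ one needs $\vecv_k^T\textsc{Ghc}(\vecp^k)/k\to\vecw^T\vecp$, and a direct Pinsker bound is too loose because $\vecv_k$ has range of order $k$. The natural fix is to transport the Hoeffding concentration of $\vecv_k/k$ under $\vecp^k$ across to $\textsc{Ghc}(\vecp^k)$ using the $o(k)$ KL bound, for instance via a Donsker--Varadhan change-of-measure estimate.
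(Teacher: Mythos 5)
Your converse half is fine, and in fact more explicit than the paper: the chain ``$\kl(\vecp\Vert\vect^k)\ge\sum_j\kl(\vecp_j\Vert\vect)\ge k\kl(\bar\vecp\Vert\vect)\ge k\,\mathsf{D}(\vecv_k^T\vecp/k)$'' is exactly the single-letterization that the paper buries in Lemma~\ref{lem:geometry} by taking the optimal block pmf to be the product $\vecp^{*k}$. Note also that once an achievability bound $\limsup_k\kl(\vecd_k\Vert\vect^k)/k\le\mathsf{D}(S)$ is available, your own converse plus the strict monotonicity and continuity of $\mathsf{D}$ on $(w_{\min},\vecw^T\vect)$ already force $\vecv_k^T\vecd_k/k\to S$; the detour through $\lambda_k\to\lambda^*$ is not needed for that conclusion.

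The genuine gap is in the achievability half, exactly at the step you flag and then postpone. Your argument needs, twice, that the per-symbol cost of $\textsc{Ghc}(\vecp_\mu^k)$ converges to $\vecw^T\vecp_\mu$, and it needs this at the $k$-dependent tilt $\mu=\lambda_k$, i.e.\ uniformly over a range of tilts; the same uniformity question affects your use of $\kl(\textsc{Ghc}(\vecp_{\lambda_k}^k)\Vert\vecp_{\lambda_k}^k)=o(k)$, since \cite[Prop.~2]{Bocherer2011} is stated for a fixed pmf. The Donsker--Varadhan/Hoeffding patch you sketch can be made to work (it gives a cost deviation of order $\sqrt{\kl(\vecd_k\Vert\vecp_\mu^k)/k}$, uniform over compact $\mu$-ranges once the $o(k)$ bound is uniform), and the implicit monotonicity you rely on for the bisection does hold (the usual exchange argument shows $\vecv_k^T\textsc{Ghc}(\vect^k\circ2^{-\lambda\vecv_k})$ is nonincreasing in $\lambda$), but as written these are unproven lemmas on which the entire second half rests, so the proof is not complete. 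The paper sidesteps all of this: applying Lemma~\ref{lem:geometry} at blocklength $k$ with $\vecd_k=\textsc{Ghc}(\vecp^{*k})$ gives the exact identity
\begin{align}
\frac{\kl(\vecd_k\Vert\vect^k)}{k}=\mathsf{D}(E^*)-\lambda\Bigl(\frac{\vecv_k^T\vecd_k}{k}-E^*\Bigr)+\frac{\kl(\vecd_k\Vert\vecp^{*k})}{k},
\end{align}
so once the last term vanishes the operating point is squeezed between the strictly convex curve $\mathsf{D}$ and an $o(1)$ neighborhood of its tangent, which pins down \emph{both} coordinates without any concentration argument; the constraint $\vecv_k^T\vecd_k/k\le S$ is then secured not by tracking the limit of $\lambda_k$ but by the chord construction, i.e.\ by aiming \textsc{Ghc} at an interior point $E''<E^*$ so that the resulting dyadic point lies in the segment between chord and curve. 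If you want to keep your duality route, you must actually prove the uniform cost-convergence lemma; otherwise adopt the geometric squeeze, which makes it unnecessary.
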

\begin{proof}
The proof is given in Section~\ref{sec:analysis}.
\end{proof}

\section{Writing to the ICT Cubes}
\begin{figure}
\footnotesize
\centering
\def\svgwidth{1.0\columnwidth}
\executeiffilenewer{images/system.svg}{images/system.pdf}%
{inkscape -z -D --file=images/system.svg %
--export-pdf=images/system.pdf --export-latex}%
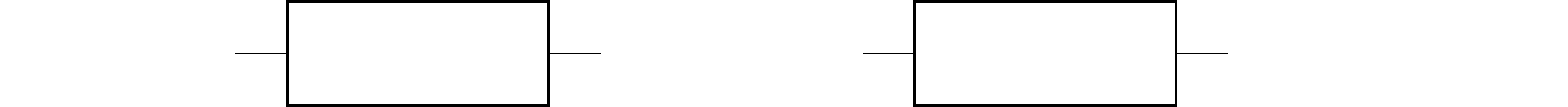%

\caption{We first compress the text to a binary sequence and then match the binary sequence to the design criteria by using \textsc{ccGhc}.}
\label{fig:system}
\end{figure}
\begin{table*}
\caption{The employed \textsc{Huffman} code.}
\label{tab:huffman}
\begin{tabular}{lllllllll}
\_ : 000 & a : 0100 & b : 101110 & c : 01101 & d : 11110 & e : 110 & f : 11111 & g : 001110 & h : 00110\\
i : 0101 & j : 001111111 & k : 00111101 & l : 01100 & m : 10110 & n : 1000 & o : 0111 & p : 100101&q : 001111110\\
r : 1010 & s : 1110 & t : 0010 & u : 10011 & v : 00111100 & w : 101111 & x : 001111100 & y : 100100 & z : 001111101
\end{tabular}
\end{table*}
\begin{table*}
\caption{The matching code induced by $\vecd_3=\textsc{ccGhc}(\vect^3,\vecv_3,3S)$.}
\label{tab:ccghc}
\begin{tabular}{rrrrrrrrr}
0010 : lll & 1101 : llr & 00000 : llm & 1100 : lrl & 1111 : lrr & 00011 : lrm & 00010 : lml & 01101 : lmr & 0000111 : lmm\\
1110 : rll & 1001 : rlr & 01100 : rlm & 1000 : rrl & 1011 : rrr & 01111 : rrm & 01110 : rml & 01001 : rmr & 000010 : rmm\\
01000 : mll & 01011 : mlr & 001101 : mlm & 01010 : mrl & 1010 : mrr & 001100 : mrm & 001111 : mml & 001110 : mmr & 0000110 : mmm
\end{tabular}
\end{table*}
\begin{figure*}
\centering
\includegraphics[scale=0.85]{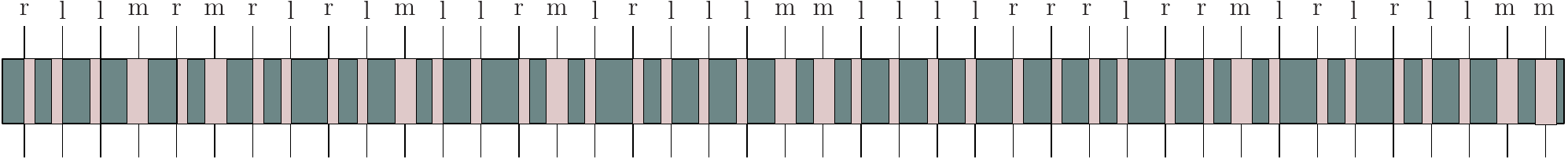}
\caption{Decoding the top floor with the codes specified in Table~\ref{tab:ccghc} and Table~\ref{tab:huffman} results in \texttt{shannon the fu}. This is the first part of \texttt{shannon the fundamental problem of communication is that of reproducing at one point either exactly or approximately a message selected at another point}, a phrase taken from the first chapter of \cite{Shannon1948}.}
\label{fig:topfloor_ICT}
\end{figure*}
We now apply \textsc{ccGhc} to solve the design problem of finding an encoding scheme subject to the design criteria C1.-C3. as stated in the introduction. The text that we write to the facades of the ICT cubes consists of quotes from scientists that significantly contributed to the development of information and communications technology. Our coding scheme consists of two parts. We first compress the text to a binary sequence by Huffman coding, and then match the binary sequence to the design criteria by using \textsc{ccGhc}. See Fig.~\ref{fig:system} for an illustration.
\subsection{Text Compression}
To keep the number of symbols small, we write the text using only small Latin characters and space, which results in an alphabet size of $27$. To map the text to a binary sequence, we use the Huffman code \cite{Huffman1952} of the relative symbol frequencies in the text. See Tab.~\ref{tab:huffman} for the resulting code. $49.4$\% of the bits in the resulting binary sequence are zeros and $50.6\%$ are ones, so roughly speaking, our assumption to have an iid sequence of equiprobable bits at the binary interface is reasonable.

\subsection{Criteria Matching}

We now map the binary sequence blockwise to a sequence of slats. The objective is to match the design criteria C1.-C3. as stated in the introduction. To see how close we are to the optimum, we calculate the optimal pmf $\vecp^*$ when the restriction to dyadic pmfs is dropped. The optimal pmf is given by
\begin{align}
\vecp^* = ( 0.3988,\,0.3988,\,0.2023)^T.
\end{align}
This is the pmf closest to the uniform pmf, thus the best match of criterion C1., while fulfilling the shadowing constraints C2. and C3. 

We choose $k=3$ as blocklength for the matcher codes. As a first matcher code, we use the code induced by the dyadic pmf $\vecd_3=\textsc{ccGhc}(\vect^3,\vecv_3,3S)$. The resulting code is displayed in Table~\ref{tab:ccghc}. The first row of the resulting sequence of slats is displayed in Fig.~\ref{fig:topfloor_ICT}. The interested reader is invited to decode it by using first the matching code in Table~\ref{tab:ccghc} in inverse direction and then the Huffman code in Table~\ref{tab:huffman} in inverse direction. The effective relative frequencies of the slats is
\begin{align}
\vecp_\mathrm{eff}&=\frac{1}{4264}(\sharp\{\text{left}\},\,\sharp\{\text{right}\},\,\sharp\{\text{middle}\})^T\\
&=(0.3838,\,0.39457,\,0.22162)^T.
\end{align}
As we can see, $\vecp_\mathrm{eff}$ is very close to $\vecp^*$. The effective shadowing is
\begin{align}
S_\mathrm{eff} =  0.20881,
\end{align}
which corresponds to an average shadowing of $33.4\%$. This exceeds the target percentage of $33\%$ by $0.4$ percentage points, thus violates criterion C3. This problem can be fixed as follows. We use a stricter shadowing constraint $S'=0.206$ instead of the original target constraint $S=0.2063$ and calculate $\vecd_3'=\textsc{ccGhc}(\vect^3,\vecv_3,3S')$. The effective relative slats frequencies that result from the code induced by $\vecd_3'$ are now
\begin{align}
\vecp_\mathrm{eff}'=(0.39132,\,0.4317,\,0.17698)^T.
\end{align}
and the effective shadowing is
\begin{align}
S_\mathrm{eff}'=0.20301.
\end{align}
This corresponds to an average shadowing of $32.5\%$, thus fulfills criterion C3. Note that $\vecp_\mathrm{eff}$ is closer to the uniform pmf than $\vecp_\mathrm{eff}'$ and thus matches better criterion C1. It is now up to the architects to choose among code $\vecd_3$ and $\vecd_3'$, i.e., to find the best trade-off between criterion C1. and the criteria C2. and C3. for their purpose.

\section{Analysis of \textsc{ccGhc}}
\label{sec:analysis}
This section consist of two parts. In Subsection~A, we derive two lemmas that characterize the operating point geometry in terms of average cost and distance to the target pmf. We then use these two lemmas in Subsection~B to actually prove Proposition~1.
\subsection{Operating Point Geometry}
We start by characterizing the region of achievable operating points. We define the \emph{distance-cost function} $\mathsf{D}(E)$ pointwise by the solution of 
\begin{align}
\begin{split}
\minimize_{\vecp}\quad& \kl(\vecp\Vert\vect)\\
\text{subject to}\quad
&\vecw^T\vecp-E\leq 0\\
&-\vecp\leq 0\\
&\vecone^T\vecp-1=0
\end{split}
\label{prob:simplex}
\end{align}
i.e., if $\vecp^*$ is the optimal pmf for $E=E^*$, then $\mathsf{D}(E^*)=\kl(\vecp^*\Vert\vect)$. Note that the two last constraints restrict $\vecp$ to the probability simplex, i.e., ensure that $\vecp$ is a pmf. By the convention $\log 0 = -\infty$, clearly, whenever $t_i=0$, the optimal pmf assigns $p_i^*=0$, since otherwise, the objective function would take the value infinity. Therefore, without loss of generality, we assume in the following that $t_i>0$ for all $i$. The Lagrangian is
\begin{align}
L(\vecp,\lambda,\pmb{\mu},\nu)=\kl(\vecp\Vert\vect)+\lambda(\vecw^T\vecp-E)-\pmb{\mu}^T\vecp+\nu(\vecone^T\vecp-1).
\end{align}
Assume $\vecp$ is feasible. Then the KKT conditions are
\begin{align}
\lambda\geq 0,\,
\vecmu&\geq 0\\
\lambda(\vecw^T\vecp-E)&=0\label{eq:kkt:weight}\\
\mu_ip_i&=0\label{eq:dmcCost:kktMu}\\
\frac{\partial L(\vecp,\vecmu,\nu,\lambda)}{\partial p_i}=\log\frac{p_i}{t_i}-1 -\mu_i+\nu+\lambda w_i&=0\label{eq:dmcCost:kktPartial}
\end{align}
It can be shown that for Problem~\eqref{prob:simplex}, a pmf $\vecp$ is optimal if and only if there are $\lambda,\pmb{\mu},\nu$ such that $\vecp$ fulfills the KKT conditions. Denote now by $\vecp^*,\lambda,\pmb{\mu},\nu$ values that fulfill the KKT conditions. By the last condition,
\begin{align}
\log p^*_i&=\log t_i + 1 + \mu_i-\nu-\lambda w_i
\end{align}
since by assumption $t_i>0$, the right-hand side is finite, therefore, $p_i>0$. Thus, by \eqref{eq:dmcCost:kktMu}, $\mu_i=0$ and we conclude
\begin{align}
\log p^*_i&=\log t_i + 1 -\nu-\lambda w_i,\qquad i=1,\dotsc,m.
\end{align}
\begin{lemma}\label{lem:convexity}
For $w_{\min}<E<\vecw^T\vect$, the distance-cost function $\mathsf{D}(E)$ is strictly convex in $E$.
\end{lemma}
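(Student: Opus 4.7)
The plan is to exploit the structure of problem~\eqref{prob:simplex} as a convex optimization problem whose only $E$-dependence sits in an affine inequality constraint. Writing the lemma in this way, $\mathsf{D}(E)$ is the optimal-value function of a perturbed convex program, so (non-strict) convexity is classical. The real content is upgrading the inequality to a strict one, and I would do this by a perturbation/midpoint argument that leverages the strict convexity of the Kullback-Leibler divergence in its first argument.

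More concretely, for any $E$ with $w_{\min} < E < \vecw^T\vect$, let $\vecp^*(E)$ denote the minimizer of problem~\eqref{prob:simplex}. My first step is to show that the cost constraint is active at the optimum, i.e.\ $\vecw^T\vecp^*(E)=E$. The argument uses the KKT conditions already displayed in the excerpt: if $\lambda=0$, then from \eqref{eq:dmcCost:kktPartial} together with $\mu_i=0$ (which holds because $t_i>0$ forces $p_i^*>0$) and the normalization $\vecone^T\vecp^*=1$, one obtains $\vecp^*=\vect$. This is incompatible with feasibility when $E<\vecw^T\vect$. Hence $\lambda>0$, and the complementary-slackness condition \eqref{eq:kkt:weight} yields $\vecw^T\vecp^*(E)=E$.

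Now fix $w_{\min}<E_1<E_2<\vecw^T\vect$ and $\theta\in(0,1)$, and set $E_\theta=\theta E_1+(1-\theta)E_2$. Let $\vecp_1^*=\vecp^*(E_1)$ and $\vecp_2^*=\vecp^*(E_2)$, and form the convex combination $\vecp_\theta=\theta\vecp_1^*+(1-\theta)\vecp_2^*$. Since the probability simplex is convex, $\vecp_\theta$ is a pmf; and since $\vecw^T\vecp_i^*=E_i$ by the previous step, linearity gives $\vecw^T\vecp_\theta=E_\theta$, so $\vecp_\theta$ is feasible for problem~\eqref{prob:simplex} at $E=E_\theta$. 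Therefore
\begin{align*}
\mathsf{D}(E_\theta)\leq \kl(\vecp_\theta\Vert\vect).
\end{align*}
Because $\vecw^T\vecp_1^*=E_1\neq E_2=\vecw^T\vecp_2^*$, the two optimizers are distinct; and $\kl(\cdot\Vert\vect)$ is strictly convex on the probability simplex (as $x\mapsto x\log x$ is strictly convex). Consequently
\begin{align*}
\kl(\vecp_\theta\Vert\vect)<\theta\kl(\vecp_1^*\Vert\vect)+(1-\theta)\kl(\vecp_2^*\Vert\vect)=\theta\mathsf{D}(E_1)+(1-\theta)\mathsf{D}(E_2),
\end{align*}
which chains with the previous inequality to give strict convexity.

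The step I expect to be the main obstacle, or at least the one requiring care, is verifying that the cost constraint is active for every $E$ in the stated open interval; everything else is a standard midpoint convexity argument once activity is in hand. The bound $E<\vecw^T\vect$ is exactly what prevents the unconstrained minimizer $\vect$ from being feasible, and the bound $E>w_{\min}$ is what guarantees non-emptiness of the feasible set so that $\vecp^*(E)$ exists in the first place. Together they ensure that the KKT multiplier $\lambda$ is strictly positive and that distinct values of $E$ produce distinct optimizers, which is precisely the mechanism behind strict convexity.
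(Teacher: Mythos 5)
Your proof is correct, but it takes a genuinely different route from the paper. The paper argues analytically: it writes the optimizer in Gibbs form $p_i^*\propto t_i e^{-\lambda w_i}$, shows that the map $f(\lambda)=\vecw^T\vecp^*$ is strictly decreasing (the derivative's numerator is $-\sum_{i}\sum_{j\ge i}(w_i-w_j)^2 t_it_je^{-\lambda(w_i+w_j)}<0$, essentially minus a variance), and then invokes the sensitivity relation $\lambda=-\frac{\de\mathsf{D}(E)}{\de E}$ to conclude $\frac{\de^2\mathsf{D}(E)}{\de E^2}=-\frac{\de f^{-1}(E)}{\de E}>0$. You instead avoid derivatives altogether: you use the KKT conditions only to establish that the cost constraint is active on $(w_{\min},\vecw^T\vect)$ (since $\lambda=0$ would force $\vecp^*=\vect$, infeasible for $E<\vecw^T\vect$), so optimizers at distinct costs are distinct, and then strict convexity of $\vecp\mapsto\kl(\vecp\Vert\vect)$ in its first argument (valid here since the paper assumes $t_i>0$, so the linear term $-\sum_i p_i\log t_i$ is finite and $\sum_i p_i\log p_i$ is strictly convex) gives $\mathsf{D}(E_\theta)\le\kl(\vecp_\theta\Vert\vect)<\theta\mathsf{D}(E_1)+(1-\theta)\mathsf{D}(E_2)$. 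Your argument is more elementary and does not need invertibility of $f$, differentiability of $\mathsf{D}$, or the multiplier-sensitivity interpretation; the paper's computation, on the other hand, delivers more than the lemma asks for, namely smoothness of $\mathsf{D}$ and the identification of $-\lambda$ as its tangent slope, which is precisely what Lemma~\ref{lem:geometry} and the achievability argument for Proposition~\ref{prop:achievability} use afterwards, so the extra machinery is not wasted in the paper even though your shorter proof fully suffices for the statement of Lemma~\ref{lem:convexity}.
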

\begin{proof}
Denote by $\vecp^*$ and optimal pmf for $E=E^*$. Since $\vecp^*$ is a pmf,
\begin{align}
p^*_i=\frac{p^*_i}{\sum_j p^*_j}&=\frac{t_ie^{1-\nu-\lambda w_i}}{\sum_j t_j e^{1-\nu-\lambda w_j}}\\
&=\frac{t_i e^{-\lambda w_i}}{\sum_j t_j e^{-\lambda w_j}}.
\end{align}
Since by assumption $E^*<\vecw^T\vect$, the average weight constraint is active, which implies $\lambda>0$. Thus, by \eqref{eq:kkt:weight}, $\vecw^T\vecp^*=E^*$, i.e,
\begin{align}
\vecw^T\vecp^* = \frac{\sum_i w_i t_i e^{-\lambda w_i}}{\sum_j t_j e^{-\lambda w_j}}\triangleq f(\lambda)=E^*.
\end{align}
We differentiate $f(\lambda)$ and get
\begin{align}
\frac{\de f(\lambda)}{\de \lambda}=\frac{\sum_i\sum_j (w_i w_j-w_i^2)t_i t_j e^{-\lambda (w_i+w_j)}}{\left[\sum_j t_j e^{-\lambda w_j}\right]^2}
\end{align}
We now want to show that $\frac{\de f(\lambda)}{\de \lambda}<0$. Since the denominator is positive, we only need to consider the numerator. We have
\begin{align}
&\sum_i \sum_j (w_iw_j-w_i^2)t_i t_j e^{-\lambda (w_i+w_j)}\\
&=\sum_i \sum\limits_{j\geq i}(w_iw_j-w_i^2+w_jw_i-w_j^2)t_i t_j e^{-\lambda (w_i+w_j)}\\
&=\sum_i \sum\limits_{j\geq i}[-(w_i-w_j)^2]t_i t_j e^{-\lambda (w_i+w_j)}<0
\end{align}
where the inequality in the last line follows since there is at least one pair $(i,j)$ such that $w_i\neq w_j$ and since, by assumption, $t_i>0$ for all $i$. Thus, $f$ is strictly monotonically decreasing and thereby invertible on its image, i.e., on $(w_{\min},\vecw^T\vect)$. Consequently, $\lambda=f^{-1}(E)$ is strictly monotonically decreasing. By \cite[Sec.~5.6.3]{Boyd2004},
$\lambda = -\frac{\de \mathsf{D}(E)}{\de E}$, 
thus,
\begin{align}
\frac{\de^2 \mathsf{D}(E)}{\de E^2}=-\frac{\de f^{-1}(E)}{\de E}>0
\end{align}
which shows the strict convexity of $\mathsf{D}(E)$ in $E$.
\end{proof}
\begin{figure*}[!t]
\footnotesize
\centering
\parbox{\columnwidth}{
\def\svgwidth{1.0\columnwidth}
\executeiffilenewer{images/geometryD.svg}{images/geometryD.pdf}%
{inkscape -z -D --file=images/geometryD.svg %
--export-pdf=images/geometryD.pdf --export-latex}%
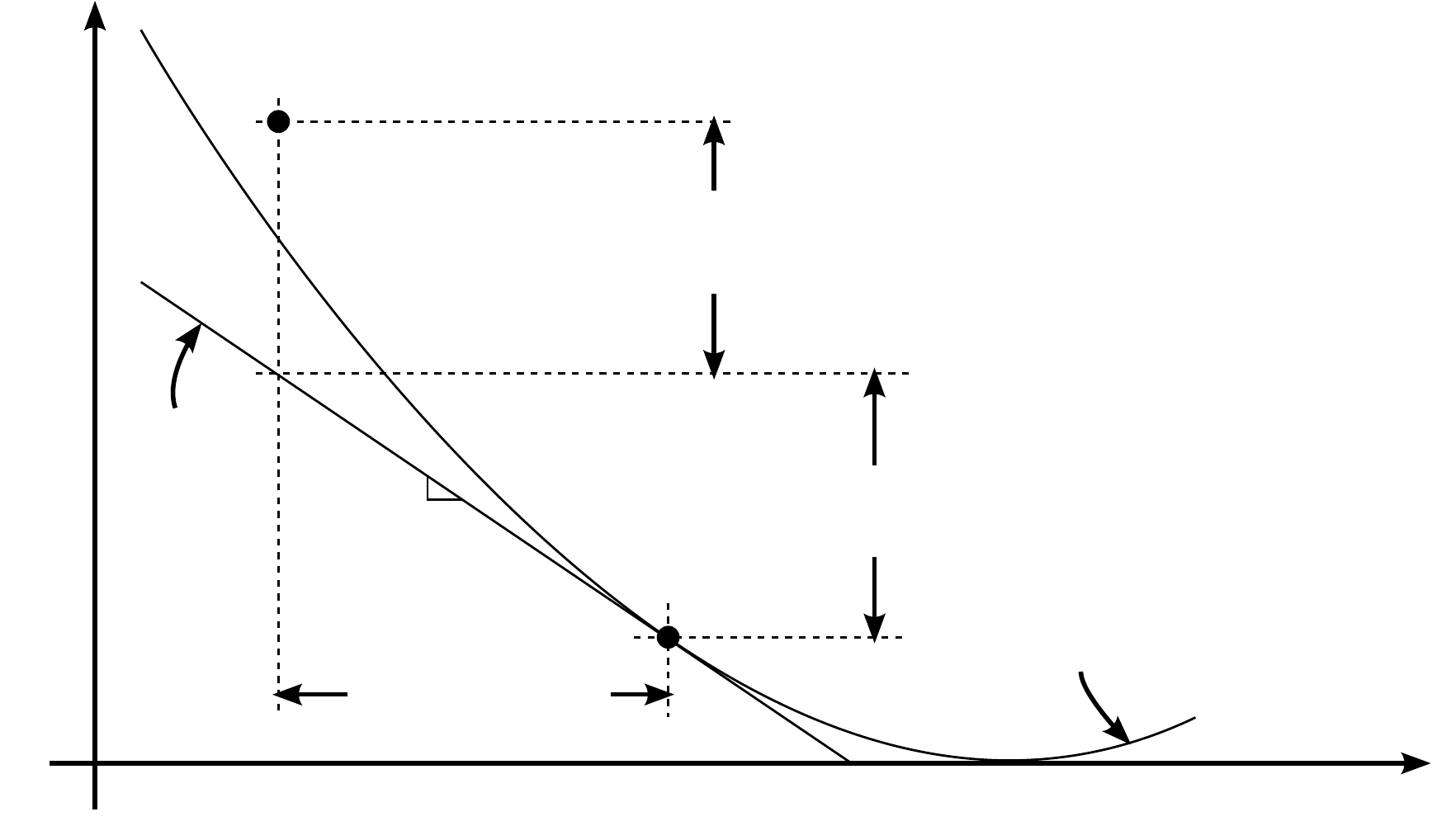%

\caption{}
\label{fig:geometry}
}
\parbox{\columnwidth}{
\def\svgwidth{1.0\columnwidth}
\executeiffilenewer{images/achievabilityD.svg}{images/achievabilityD.pdf}%
{inkscape -z -D --file=images/achievabilityD.svg %
--export-pdf=images/achievabilityD.pdf --export-latex}%
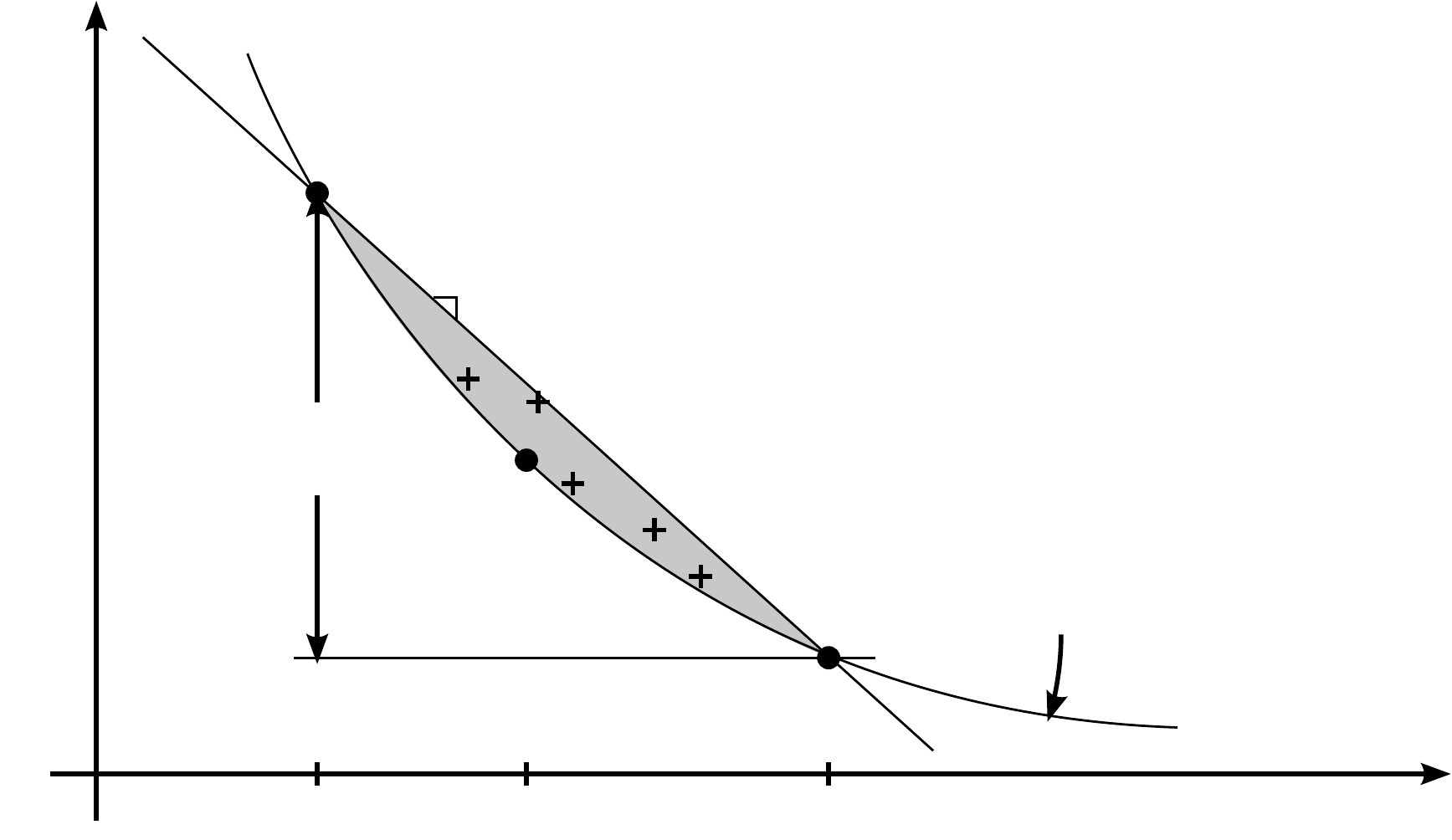%

\caption{}
\label{fig:achievability}
}
\end{figure*}

\begin{lemma}\label{lem:geometry} 
For a given cost constraint $E^*$, denote by $\vecp^*$ an optimal pmf. Denote by $\vecp$ an arbitrary pmf with the only restriction that $p_i=0$ whenever $p_i^*=0$. Then
\begin{align}
\kl(\vecp\Vert\vect)=\mathsf{D}(E^*)-\lambda(\vecw^T\vecp-E^*)+\kl(\vecp\Vert\vecp^*)\label{eq:geometry}.
\end{align}
where $-\lambda$ is the the slope of the tangent of $\mathsf{D}$ in $(E^*,\mathsf{D}(E^*))$.
\end{lemma}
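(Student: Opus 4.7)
The plan is to exploit the characterization of $\vecp^*$ already obtained from the KKT analysis preceding Lemma~\ref{lem:convexity}, namely
\begin{align}
\log p^*_i = \log t_i + 1 - \nu - \lambda w_i
\end{align}
for every $i$ with $p^*_i>0$. This gives an explicit expression for $\log t_i$ in terms of $\log p^*_i$, and substituting it into $\kl(\vecp\Vert\vect)=\sum_i p_i\log p_i-\sum_i p_i\log t_i$ lets us convert the reference pmf from $\vect$ to $\vecp^*$.

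The key step is the substitution. Since by hypothesis $p_i=0$ whenever $p_i^*=0$, all the terms indexed by zeros of $\vecp^*$ vanish in the sum (using the usual convention $0\log 0=0$), so we may legitimately replace $\log t_i$ by $\log p_i^*-1+\nu+\lambda w_i$ wherever $p_i>0$. This yields
\begin{align}
\kl(\vecp\Vert\vect)=\kl(\vecp\Vert\vecp^*)+(1-\nu)-\lambda\vecw^T\vecp,
\end{align}
after using $\sum_i p_i=1$. The constant $1-\nu$ is then pinned down by specializing this identity to $\vecp=\vecp^*$: the left-hand side becomes $\mathsf{D}(E^*)$, the $\kl(\vecp^*\Vert\vecp^*)$ term vanishes, and the KKT complementary slackness condition \eqref{eq:kkt:weight} gives $\vecw^T\vecp^*=E^*$ whenever $\lambda>0$, hence $1-\nu=\mathsf{D}(E^*)+\lambda E^*$. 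Plugging this back and regrouping produces exactly \eqref{eq:geometry}.

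Finally, the interpretation of $-\lambda$ as the slope of the tangent to $\mathsf{D}$ at $(E^*,\mathsf{D}(E^*))$ was already recorded in the proof of Lemma~\ref{lem:convexity} via the sensitivity formula $\lambda=-\de\mathsf{D}(E)/\de E$ from \cite[Sec.~5.6.3]{Boyd2004}, so no extra work is needed for that claim.

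The only mildly delicate point I foresee is the degenerate case where the weight constraint is inactive at $E^*$, in which case $\lambda=0$ and the complementary slackness argument no longer forces $\vecw^T\vecp^*=E^*$. In that regime, however, the $\lambda(\vecw^T\vecp-E^*)$ term drops out of \eqref{eq:geometry} entirely, and the identity reduces to the standard Pythagorean identity $\kl(\vecp\Vert\vect)=\kl(\vecp\Vert\vecp^*)+\kl(\vecp^*\Vert\vect)$ for the I-projection $\vecp^*$ of $\vect$ onto the simplex, so the same derivation goes through unchanged. Apart from bookkeeping the support condition, the proof is essentially a one-line substitution.
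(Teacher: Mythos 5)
Your proof is correct and takes essentially the same route as the paper: both substitute the KKT stationarity form $\log p_i^*=\log t_i+1-\nu-\lambda w_i$ to switch the reference pmf from $\vect$ to $\vecp^*$, use the normalization $\sum_i p_i=\sum_i p_i^*=1$, and invoke complementary slackness to identify $\vecw^T\vecp^*=E^*$. The only cosmetic difference is that you pin down the additive constant $1-\nu$ by specializing the identity to $\vecp=\vecp^*$, whereas the paper cancels it through an add-and-subtract expansion of $\sum_i p_i\log p_i^*$; your remark on the degenerate case $\lambda=0$ is a harmless extra.
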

\begin{proof}
\begin{align}
\kl(\vecp\Vert\vect)&=\sum_i p_i\log\frac{p_i}{t_i}\\
&=\sum_i p_i\log\frac{p_ip_i^*}{t_ip_i^*}\\
&=\sum_i p_i\log\frac{p_i^*}{t_i}+\kl(\vecp\Vert\vecp^*)\\
&=\sum_i p_i\log p_i^*-\sum_i p_i\log t_i+\kl(\vecp\Vert\vecp^*)
\end{align}
We further develop the first term
\begin{align}
\sum_i &p_i\log p_i^*=\sum_i (p_i+p_i^*-p_i^*)\log p_i^*\\
&=-\entop(\vecp^*)+\sum_i (p_i-p_i^*)\log p_i^*\\
&=-\entop(\vecp^*)+\sum_i (p_i-p_i^*)(\log t_i+1-\nu-\lambda w_i)\\
&=-\entop(\vecp^*)-\lambda(\vecw^T\vecp-\vecw^T\vecp^*)+\sum_i p_i\log t_i\nonumber\\
&\qquad\qquad-\sum_i p_i^*\log t_i\\
&=\kl(\vecp^*\Vert\vect)-\lambda(\vecw^T\vecp-\vecw^T\vecp^*)+\sum_i p_i\log t_i.
\end{align}
All together,
\begin{align}
\kl(\vecp\Vert\vect)&=\kl(\vecp^*\Vert\vect)-\lambda(\vecw^T\vecp-\vecw^T\vecp^*)+\kl(\vecp\Vert\vecp^*)\\
&=\mathsf{D}(E^*)-\lambda(\vecw^T\vecp-E^*)+\kl(\vecp\Vert\vecp^*).
\end{align}
\end{proof}
\subsection{Proof of Proposition~\ref{prop:achievability}}
We now show that any target operating point $Q^*=(\vecw^T\vecp^*,\mathsf{D}(\vecw^T\vecp^*))$ can be achieved by a dyadic pmf. We do this in two steps. First, we show the existence of dyadic operating points close to the target operating point, and then we show that \textsc{ccGhc} actually finds them. Both results are a direct consequence of the strict convexity of the distance-cost function $\mathsf{D}(E)$ that we stated in Lemma~\ref{lem:convexity}.
\subsubsection{Existence of good dyadic points}
\label{subsec:existence}

Consider the optimal pmf $\vecp^{*k}$ of $k$ consecutive symbols. Define $\vecv_k=\vecw^{\oplus k}$ where $\vecw^{\oplus k}$ denotes the Kronecker sum of $k$ copies of $\vecw$. Furthermore, define $\vecd_k=\textsc{Ghc}(\vecp^{*k})$. By Lemma~\ref{lem:geometry}, the operating point geometry becomes
\begin{align}
\frac{\kl(\vecd_k\Vert\vect^k)}{k} = \mathsf{D}(E^*) - \lambda\Bigl(\frac{\vecv_k^T\vecd_k}{k}-E^*\Bigr)+\frac{\kl(\vecd_k\Vert\vecp^{*k})}{k}\label{eq:kgeometry}.
 \end{align}
By \cite[Prop.~2]{Bocherer2011}, since $\vecd_k=\textsc{Ghc}(\vecp^{*k})$, the normalized KL-distance on the right-hand side goes to zero as $k\to\infty$. Consider now Fig.~\ref{fig:geometry}. The tangent of $\mathsf{D}(E)$ in $Q^*$ is given by
\begin{align}
g(E)=\mathsf{D}(E^*) - \lambda(E-E^*).
\end{align}
As the normalized KL-distance of $\vecd_k$ to $\vecp^{*k}$ gets smaller, the normalized KL-distance of $\vecd_k$ to $\vect^k$ on the left-hand side of \eqref{eq:kgeometry} is approaching the tangent $g$. However, because the tangent is linear in $E$ and $\mathsf{D}$ is strictly convex and lower bounds $\frac{\kl(\vecd_k\Vert\vect^{*k})}{k}$, the dyadic operating point $(\frac{\vecv_k^T\vecd_k}{k},\frac{\kl(\vecd_k\Vert\vect^{*k})}{k})$ has to approach $Q^*$ both in terms of distance and cost.

\subsubsection{Finding good dyadic points}

It remains to show that algorithm \textsc{ccGhc} finds good dyadic points. This can best be seen in Fig.~\ref{fig:achievability}. Suppose we want to find a dyadic pmf $\vecd_k$ such that for a given $\epsilon > 0$,
\begin{align}
\frac{\kl(\vecd_k\Vert\vect^*)}{k}\leq\mathsf{D}(E^*)+\epsilon\text{ and }\frac{\vecv_k^T\vecd_k}{k}\leq E^*.\label{eq:requirements}
\end{align}
Define
\begin{align}
E' : \mathsf{D}(E')=\mathsf{D}(E^*)+\epsilon\text{ and }E''=\frac{E'+E^*}{2}.
\end{align}
The chord from $Q^*=(E^*,\mathsf{D}(E^*))$ to $Q'=(E',\mathsf{D}(E'))$ cuts a segment from the area above $\mathsf{D}$. Because of the strict convexity of $\mathsf{D}$, this segment is nonempty. Note that all operating points in the segment fulfill the requirements \eqref{eq:requirements}. As shown in the previous Subsection~\ref{subsec:existence}, for a big enough $k$, there are dyadic operating points approximating $Q''=(E'',\mathsf{D}(E''))$ that lie within this segment. Define now $-\xi$ as the slope of the chord, i.e.,
\begin{align}
\xi = -\frac{\mathsf{D}(E')-\mathsf{D}(E^*)}{E'-E^*}.
\end{align}
Now, $\vecd_k=\textsc{Ghc}(\vect^k\circ 2^{-\xi\vecv_k})$ minimizes
\begin{align}
\frac{1}{k}\Bigl[\kl(\vecd_k\Vert\vect^k)+\xi\vecv_k^T\vecd\Bigr]
\end{align}
and will thus find a point in the segment. The slope $-\xi$ will also be evaluated by \textsc{cCghc}, thus $\vecd_k=\textsc{ccGhc}(\vect^k,\vecv_k,kE^*)$ will give a dyadic operating point at least as good as 
\begin{align}
\vecd_k=\textsc{Ghc}(\vect^k\circ 2^{-\xi\vecv_k}).
\end{align}
This concludes the proof of Proposition~1.

\bibliographystyle{IEEEtran}
\normalsize
\bibliography{IEEEabrv,confs-jrnls,Dnc}

\end{document}